\documentclass[11pt,a4paper]{article}
\usepackage{amsmath}
\usepackage{amsfonts}
\usepackage{amssymb}
\usepackage{graphicx}
\usepackage{enumerate}
\usepackage[ruled]{algorithm2e}
\usepackage{tabularx,booktabs}
\usepackage{float}

\textwidth 16cm \textheight 24cm \oddsidemargin 0cm \evensidemargin 0cm \voffset -2cm

\newtheorem{theorem}{Theorem}

\newtheorem{lemma}[theorem]{Lemma}

\newtheorem{example}[theorem]{Example}

\newenvironment{proof} {\par \noindent \textbf{Proof. }}{\QED \par \bigskip \par}
\newcommand{\QED}{\hfill$\square$}

\title {
    \bf {Efficient algorithm for the vertex connectivity \\ of trapezoid graphs}
}

\author
{
{\large \sc Aleksandar Ili\' c \footnotemark[3]} \\
{\em \normalsize Faculty of Sciences and Mathematics, Vi\v segradska 33, 18000 Ni\v s, Serbia} \\
{\normalsize e-mail: { \tt aleksandari@gmail.com }}
}

\begin{document}

\maketitle

\begin{abstract}
The intersection graph of a collection of trapezoids with corner points lying on two parallel lines
is called a trapezoid graph. These graphs and their generalizations were applied in various fields,
including modeling channel routing problems in VLSI design and identifying the optimal chain of
non-overlapping fragments in bioinformatics. Using modified binary indexed tree data structure, we
design an algorithm for calculating the vertex connectivity of trapezoid graph $G$ with time
complexity $O (n \log n)$, where $n$ is the number of trapezoids. Furthermore, we establish
sufficient and necessary condition for a trapezoid graph $G$ to be bipartite and characterize trees
that can be represented as trapezoid graphs.
\end{abstract}

{\bf {Keywords:}} trapezoid graphs; vertex connectivity; algorithms; binary indexed tree.
\vspace{0.2cm}

{{\bf AMS Classifications:} 05C85, 68R10, 05C40.} \vspace{0.2cm}


\section{Introduction}

A trapezoid diagram consists of two horizontal lines and a set of trapezoids with corner points
lying on these two lines. A graph $G = (V, E)$ is a trapezoid graph when a trapezoid diagram exists
with trapezoid set $T$, such that each vertex $i \in V$ corresponds to a trapezoid $T [i]$ and an
edge exists $(i, j) \in E$ if and only if trapezoids $T [i]$ and $T [j]$ intersect within the
trapezoid diagram. A trapezoid $T [i]$ between these lines has four corner points $a [i]$, $b [i]$,
$c [i]$ and $d [i]$ -- which represent the upper left, upper right, lower left and lower right
corner points of trapezoid~$i$, respectively. No two trapezoids share a common endpoint (see Figure
\ref{fig:example}).

Trapezoid graphs were first investigated by Corneil and Kamula \cite{CoKa87}. These graphs and
their generalizations were applied in various fields, including modeling channel routing problems
in VLSI design \cite{DaGoPi88} and identifying the optimal chain of non-overlapping fragments in
bioinformatics~\cite{AbOh05}. Given some labeled terminals on the upper and lower side of a
two-sided channel, terminals with the same label will be connected in a common net. Each net can be
modeled by a trapezoid that connects rightmost and leftmost terminals of that net on two horizontal
lines. In the channel routing problem we want to connect all terminals of each net so that no two
nets intersect. One can show \cite{DaGoPi88} that two nets can be routed without intersection in
the same layer if and only if their corresponding trapezoids do not intersect. Therefore, the
number of colors needed to color the trapezoid graph is the number of layers needed to route the
nets without intersection.

Let $n$ and $m$ denote the number of vertices and edges of a trapezoid graph $G$. Ma and Spinrad
\cite{MaSp94} showed that trapezoid graphs can be recognized in $O(n^2)$ time, while Mertzios and
Corneil \cite{MeCo09} designed structural trapezoid recognition algorithm based on the vertex
splitting method in $O (n (m + n))$ time, which is easier for implementation. Trapezoid graphs are
perfect, subclass of cocomparability graphs and properly contain both interval graphs and
permutation graphs. If $a [i] = b [i]$ and $c [i] = d [i]$ then the corresponding trapezoid $T [i]$
reduces to a straight line, and the trapezoid graph reduces to a permutation graph if the condition
holds for all $1 \leq i \leq n$. Similarly, the trapezoid graph reduces to the interval graph if $a
[i] = c [i]$ and $b [i] = d [i]$ for all $i$.

\begin{figure}[ht]
  \label{fig:example}
  \center
  \includegraphics [width = 9.5cm]{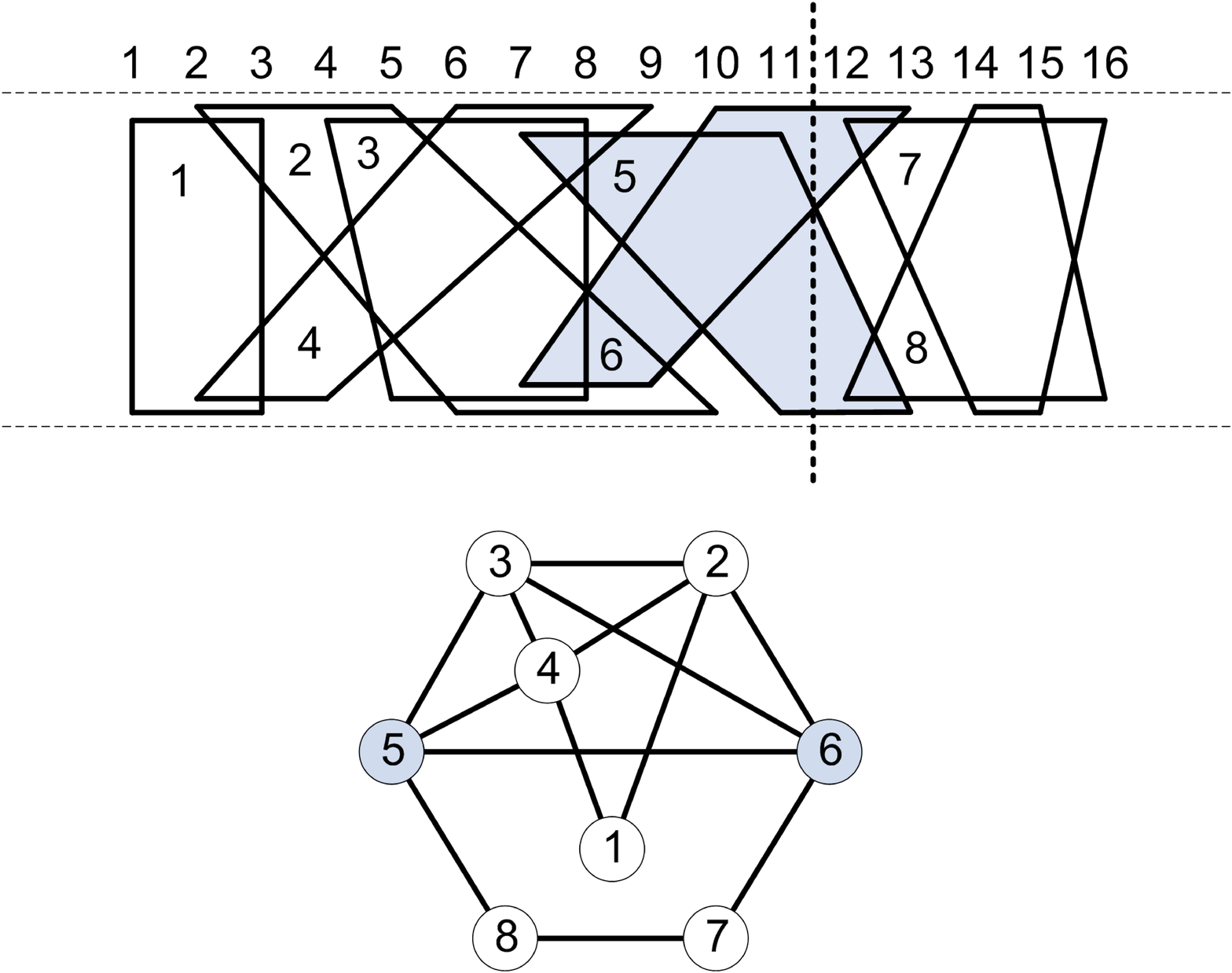}
  \caption { \textit{ A trapezoid graph and its trapezoid representation (marked vertices represent 2--cut). } }
\end{figure}

Many common graph problems, such as minimum connected dominating sets \cite{TsLiHs07}, all-pair
shortest paths \cite{MoPaPa02}, maximum weighted cliques \cite{BePaPa02}, all cut vertices
\cite{HoPaPa04}, chromatic number and clique cover \cite{FeMuWe97}, all hinge vertices
\cite{BePaPa03} in trapezoid graphs, can be solved in polynomial time. For other related problems
see \cite{ChCo96,CrGa10,Li94,Li95}. Recently, Lin and Chen \cite{LiCh09} presented $O (n^2)$
algorithms for counting the number of vertex covers (VC), minimal VCs, minimum VCs and maximum
minimal VCs in a trapezoid graph. Ili\' c and Ili\' c \cite{IlIl11} improved algorithms for
calculating the size and the number of minimum vertex covers (or independent sets), as well as the
total number of vertex covers, and reduce the time complexity to $O(n \log n)$. Ghosh and Pal
\cite{GhPa05} presented an efficient algorithm to find the maximum matching in trapezoid graphs,
which turns out to be not correct \cite{IlIl11}.

Let $G = (V, E)$ be a simple undirected graph with $|V| = n$. A vertex cut or separating set of a
connected graph $G$ is a set of vertices whose removal disconnects $G$. The connectivity or vertex
connectivity $\kappa (G)$ (where $G$ is not complete) is the size of a smallest vertex cut. A graph
is called $k$--connected or $k$--vertex--connected if its vertex connectivity is greater than or
equal to $k$. A complete graph with $n$ vertices, denoted by $K_n$, has no vertex cuts, but by
convention $\kappa (K_n) = n-1$. A connected graph is said to be separable if its vertex
connectivity is one. In that case, a vertex which disconnects the graph is called a cut-vertex or
an articulation point.

The edge cut of $G$ is a group of edges whose total removal disconnects the graph. The
edge--connectivity $\lambda (G)$ is the size of a smallest edge cut. In the simple case in which
cutting a single edge would disconnect the graph, that edge is called a bridge. Let $\delta(G)$ be
minimum vertex degree of $G$, then
$$
\kappa(G) \leq \lambda(G) \leq \delta(G).
$$

Since the strength of the network $G$ is proportional to $\kappa (G)$, graph connectivity is one of
the most fundamental problem in graph theory. Even and Tarjan \cite{EvTa75} have obtained $O(\kappa
\cdot m n \sqrt{n})$ time sequential algorithm for finding vertex connectivity of a general graph.
The authors in \cite{KaHo94} obtained parallel algorithm for testing $k$--vertex connectivity in
interval graphs. Ghosh and Pal in \cite{GhPa07} presented rather complicated algorithm with a lot
of different cases to solve the vertex connectivity problem, which takes $O(n^2)$ time and $O(n)$
space for a trapezoid graph. In this paper, we designed an algorithm with time complexity $O (n
\log n)$ for calculating the vertex connectivity of a trapezoid graph.

The rest of the paper is organized as follows. In Section 2 we introduce the modified binary
indexed tree data structure. In Section 3 we design $O (n \log n)$ time algorithm for calculating
the vertex connectivity of trapezoid graphs, improving the algorithm from \cite{GhPa07}. In Section
4 we establish sufficient and necessary condition for a trapezoid graph $G$ to be bipartite and
characterize trees that can be represented as trapezoid graphs. We close the paper in Section 5 by
proposing topics for the further research.

\section{Modified binary indexed data structure}

The binary indexed tree (BIT) is an efficient data structure for maintaining the cumulative
frequencies. We will modify this standard structure to work with minimal/maximal partial
summations.

Let $A$ be an array of $n$ elements. The modified binary indexed tree (MBIT) supports the following
basic operations:
\begin{enumerate}[($i$)]
\item for given value $x$ and index $i$, add $x$ to the element $A [i]$, $1 \leq i \leq n$;
\item for given interval $[1, i]$, find the sum of values $A [1], A [2], \ldots, A [i]$, $1 \leq i \leq n$.
\item for given interval $[1, i]$, find the minimum value among partial sums
$A [1], A [1] + A [2], A [1] + A [2] + A [3], \ldots, A [1] + A [2] + \ldots + A [i]$, $1 \leq i
\leq n$.
\end{enumerate}

Naive implementation of these operations have complexities $O (1)$, $O(n)$ and $O(n)$,
respectively. We can achieve better complexity, if we speed up the second and third operation which
will also affect the first operation.

The main idea of the modified binary indexed tree structure is that sum of elements from the
segment $[1, i]$ can be represented as sum of appropriate set of subsegments. The MBIT structure is
based on decomposition of the cumulative sums into segments and the operations to access this data
structure are based on the binary representation of the index. This way the time complexity for all
operations will be the same $O(\log n)$.

The structure is a complete binary tree with the root node 1. Its leafs correspond to the elements
from the array $A$, starting from left to right in the last level. Therefore, the elements of the
array $A$ are stored at the positions starting from $2^p$ to $2^p + n - 1$, where $p$ is the depth
of the binary tree (defined as the smallest integer such that $2^p \geq n$). The internal nodes
store the cumulative values of the leafs in the subtrees rooted at these nodes. This implies that
the value of the internal node $i$ is just the cumulative value of its two children. The parent of
the node $i$ is $\lfloor \frac{i}{2} \rfloor$, while the left and the right child of the node $i$
are $left [i] = 2i$ and $right [i] = 2i + 1$, respectively. By definition, it follows that the
number of nodes in MBIT is at most $2n$, while the depth is $\lceil \log_2 n \rceil$.

In addition to the values of the array $A$, for each node we will keep two information: $sum [x]$
will be the sum of the $A [i]$ values of all nodes in $x$'s subtree, and $min\_sum [x]$ will be the
minimum possible cumulative sum of $A [i]$'s in the subtree rooted at $x$ (starting at the leftmost
node in the subtree). We will demonstrate how to compute these fields using only information at
each node and its children. The sum of the $A[i]$'s of the subtree rooted at node $x$ will simply
be
$$
sum [x] = sum [2x] + sum [2x + 1].
$$

The minimum cumulative sum can either be in the left subtree or in the right subtree. If it is in
the left subtree, it is simply $min\_sum [left [x]]$, while if it is in the right subtree, we have
to add the cumulative sum up till the right subtree to the minimum value of the right subtree $sum
[left [x]] + min\_sum [right [x]]$. Finally, we get
$$
min\_sum [x] = \min (min\_sum [2x], sum [2x] + min\_sum [2x + 1]).
$$

For the update procedure, we just need to traverse the vertices from the leaf to the root and
update the values in the parent vertices based on the above formulas. For the query procedure, we
traverse the binary tree in a top-down manner starting from the root vertex $1$. The important
thing it to maintain the partial sums of the array $A$ (starting from $A [1]$). If the leaf that
stores $A [index]$ belongs to the left child -- we just go left and do nothing; otherwise we update
the partial sum of the left subtree and the index, and go right.

For detailed implementation see Algorithms 1 and 2. The structure is space-efficient in the sense
that it needs the same amount of storage as just a simple array of $n$ elements. Furthermore we can
use fast bitwise operations (xor, and, shift left) for more efficient implementation.

\begin{theorem}
Calculating the sum of the elements from $A [1]$ to $A [i]$, calculating the minimum value among
partial sums $A [1], A [1] + A [2], \ldots, A [1] + A [2] + \ldots + A [i]$, and updating the
element $A [i]$ in the modified binary indexed tree is performed in $O (\log n)$ time, $1 \leq i
\leq n$.
\end{theorem}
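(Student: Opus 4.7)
The plan is to bound the work of each operation by $O(1)$ per visited tree node and to show that each operation visits at most $O(\log n)$ nodes. Since the MBIT is a complete binary tree with $n$ leaves, the depth is exactly $\lceil \log_2 n \rceil$, so any root-to-leaf path (and any leaf-to-root path) has $O(\log n)$ vertices. This observation will drive all three bounds.

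For the update operation on $A[i]$, I would argue as follows. The leaf storing $A[i]$ sits at position $2^p + i - 1$, and we walk up to the root by repeatedly replacing the current index by $\lfloor \cdot / 2 \rfloor$. At every ancestor $x$ we re-evaluate the two stored quantities using only the children's current values, via the identities $sum[x]=sum[2x]+sum[2x+1]$ and $min\_sum[x]=\min(min\_sum[2x],\,sum[2x]+min\_sum[2x+1])$ stated above; each such re-evaluation is $O(1)$. Since the path from leaf to root contains $\lceil\log_2 n\rceil + 1$ nodes, the total cost is $O(\log n)$.

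For the prefix sum query on $[1,i]$, I would analyze the top-down traversal described in the text. Maintain a running accumulator $S$. At each internal node $x$, if the leaf for $A[i]$ lies in the left subtree, descend into $2x$ without changing $S$; otherwise add $sum[2x]$ to $S$ and descend into $2x+1$. Because we descend exactly one level per step, we visit $O(\log n)$ nodes, do $O(1)$ work at each, and the final $S$ equals $A[1]+\dots+A[i]$ by induction on the depth.

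The main obstacle, and the one I would spend the most care on, is the minimum partial sum query, since the optimum may be realized either inside a subtree that lies entirely within $[1,i]$ or inside the subtree that contains the boundary index $i$. I would again walk top-down from the root while maintaining the running prefix sum $S$ of all elements strictly to the left of the current subtree and a running minimum $M$. When the target index $i$ is in the right child of the current node $x$, I would update $M \gets \min(M,\,S + min\_sum[2x])$ (capturing the best partial sum entirely within the left subtree, shifted by $S$), then set $S \gets S + sum[2x]$ and recurse into $2x+1$; otherwise I recurse directly into $2x$ without altering $S$ or $M$. A short induction on the depth of the current node shows that, upon reaching the leaf for $A[i]$, $M$ equals the minimum of the partial sums $A[1],A[1]+A[2],\dots,A[1]+\dots+A[i]$. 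Once the invariant is in place, the complexity bound is immediate: one node per level, $O(1)$ work per node, hence $O(\log n)$ overall.
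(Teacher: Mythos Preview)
Your proposal is correct and follows exactly the approach the paper takes: the paper does not give a separate formal proof of this theorem but rather embeds the argument in the prose preceding the statement and in Algorithms~1 and~2, observing that the update walks a leaf-to-root path and the queries walk a root-to-leaf path, with $O(1)$ work per node and depth $\lceil\log_2 n\rceil$. Your write-up is simply a more explicit rendering of that same reasoning, including the invariant for the minimum-prefix query that the paper leaves implicit in Algorithm~2.
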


\LinesNumbered
\begin{algorithm}
    \KwIn{The value $value$, the element index $index$ and the parameters $n$ and $p = \min\{s : 2^s \geq n \}$.}

    $i = index + 2^{p} - 1$\;
    $sum [i] = value$\;
    $min\_sum [i] = value$\;
    \While {$i > 1$}
    {
        \If{$i {\ \bf mod \ } 2 = 1$}
        {
            $i = i - 1$\;
        }
        $sum [i / 2] = sum [i] + sum [i + 1]$\;
        $min\_sum [i / 2] = \min (min\_sum [i], sum [i] + min\_sum [i + 1])$\;
        $i = i / 2$\;
    }
    \caption{ Updating the modified binary indexed tree. }
\end{algorithm}

\LinesNumbered
\begin{algorithm}
    \KwIn{The index $index$.}
    \KwOut{The minimum among partial sums with elements $A[1], A [2], \ldots, A [index]$.}

    $min = \infty$\;
    $partial\_sum = 0$\;
    $i = 1$\;
    $pow = 2^{p - 1}$\;
    \While{$index > 0$}
    {
        \If{$index \geq pow$}
        {
            \If{$min > partial\_sum + min\_sum [2 \cdot i]$}
            {
                $min = partial\_sum + min\_sum [2 \cdot i]$\;
            }
            $partial\_sum = partial\_sum + sum [2 \cdot i]$\;
            $i = 2 \cdot i + 1$\;
            $index = index - pow$\;
        }
        \Else
        {
            $i = 2 \cdot i$\;
        }
        $pow = pow / 2$\;
    }
    \Return $min$\;

    \caption{ Calculating the minimal cumulative partial sum. }
\end{algorithm}

This approach is very similar to the problem regarding calculating the point of maximum overlap
among intervals (see \cite{CoLeRiSt01} Problem 14-1), that can be solved using red-black trees.

\begin{example}
Let $n = 14$ and $p = 4$. The elements of the array $A$ will be stored on positions $sum [16]$ to
$sum [29]$ and $min\_sum [16]$ to $min\_sum [29]$. The node 5 will contain the following
information
$$sum [5] = sum [20] + sum [21] + sum [22] + sum [23] = A [5] + A [6] + A [7] + A [8]$$
$$min\_sum [5] = \min (A [5], A [5] + A [6], A [5] + A [6] + A [7], A [5] + A [6] + A [7] + A
[8]).$$

If we change the value $A [6]$, we will start from the corresponding index $i = 21$ and change the
following nodes of the modified binary index tree: 21, 10, 5, 2 and 1. If we want to calculate the
minimum among partial sums with elements $A [1], A [2], \ldots, A [13]$, we calculate the following
minimum
$$
\min \left ( min\_sum [2], sum [2] + min\_sum [6], sum [2] + sum [6] + min\_sum [28] \right).
$$
\end{example}

\section{The algorithm for the vertex connectivity}

Let $T = \{1, 2, \ldots, n\}$ denote the set of trapezoids in the trapezoid graph $G = (V, E)$. For
simplicity, the trapezoid in $T$ that corresponds to vertex $i$ in $V$ is called trapezoid $T [i]$.
Without loss of generality, the points on each horizontal line of the trapezoid diagram are labeled
with distinct integers between $1$ and $2n$.

Trapezoid $i$ lies entirely to the left of trapezoid $j$, denoted by $i \ll j$, if $b [i] < a [j]$
and $d [i] < c [j]$. It follows that $\ll$ is a partial order over the trapezoid set $T$ and $(T
,\ll)$ is a strictly partially ordered set.

\begin{lemma} \cite{GhPa07}
Two vertices $T [i]$ and $T [j]$ of a trapezoid graph are not adjacent iff either ($i$) $b [i] < a
[j]$ and $d [i] < c [j]$ or ($ii$) $b [j] < a [i]$ and $d [j] < c [i]$.
\end{lemma}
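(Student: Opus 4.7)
The plan is to prove the biconditional geometrically, treating each trapezoid as the convex hull of its four corner points and using the fact that its two slanted sides are line segments between the parallel horizontal lines. Conceptually, I want to show: non-intersection of the two trapezoids is equivalent to the right boundary of one lying strictly to the left of the left boundary of the other at every height, and the latter is equivalent to the given conditions (i) or (ii).

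For the backward direction, assume without loss of generality that condition (i) holds, i.e.\ $b[i]<a[j]$ and $d[i]<c[j]$. Parameterize vertical position by $h\in[0,1]$, with $h=1$ on the top line and $h=0$ on the bottom. The right slanted side of $T[i]$ has $x$-coordinate $(1-h)\,d[i]+h\,b[i]$ and the left slanted side of $T[j]$ has $x$-coordinate $(1-h)\,c[j]+h\,a[j]$. Both linear functions satisfy strict inequality at $h=0$ and $h=1$, so the first is strictly less than the second throughout $[0,1]$. Thus $T[i]$ lies entirely to the left of $T[j]$, they share no point, and $i,j$ are non-adjacent.

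For the forward direction, assume $T[i]$ and $T[j]$ do not intersect. On the top line the intervals $[a[i],b[i]]$ and $[a[j],b[j]]$ must be disjoint, since a common top-line point would belong to both trapezoids; because all corner values are distinct integers, we may assume without loss of generality that $b[i]<a[j]$. Running the same argument on the bottom line gives either $d[i]<c[j]$, which is exactly condition (i) and we are done, or else $d[j]<c[i]$.

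The main obstacle, and really the only content of the proof, is ruling out the remaining ``crossing'' configuration $b[i]<a[j]$ together with $d[j]<c[i]$. I would derive a contradiction by observing that the right edge of $T[i]$ and the left edge of $T[j]$ must then intersect: at the top, $b[i]<a[j]$ places the former strictly to the left of the latter, while at the bottom, $c[j]<d[j]<c[i]<d[i]$ yields $d[i]>c[j]$, so the former is strictly to the right of the latter. Applying the intermediate value theorem to the difference of the two linear $x$-coordinate functions introduced above, there is some $h^{\star}\in(0,1)$ where they agree, producing a point common to both trapezoids and contradicting disjointness. Hence exactly one of (i) or (ii) must hold, completing the proof.
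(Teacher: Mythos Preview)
Your argument is correct. The parametrization of the slanted sides by height and the intermediate-value step to exclude the ``crossing'' configuration are exactly the right ideas, and every inequality you use follows from the standing assumption that all corner coordinates are distinct.

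Note, however, that the paper does not actually supply a proof of this lemma: it is quoted verbatim from \cite{GhPa07} and used as a black box. So there is no ``paper's own proof'' to compare against. Your write-up therefore goes beyond what the paper provides by giving a self-contained geometric justification. If anything, you could shorten it slightly by observing that the lemma is just the statement that two trapezoids with corners on the two lines are disjoint iff one lies entirely to the left of the other in the partial order $\ll$; your linear-interpolation argument is precisely the verification of that fact.
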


Define a cut line as line $p$ that passes through the intervals $(x, x + 1)$ and $(y, y + 1)$ on
the upper and the bottom horizontal line, respectively, and does not contain the integer points $x,
x + 1, y, y + 1$. For a such cut, let $N (x, y)$ be the number of trapezoids that have common
points with the line $p$. Define $N (x, y) = \infty$ if there are no trapezoids completely left and
no trapezoids completely right of the line $p$.

\begin{lemma}
Let $G \neq K_n$ be a trapezoid graph. Then
$$
\kappa (G) = \min_{1 \leq x \leq 2n, \ 1 \leq y \leq 2n} N (x, y).
$$
\end{lemma}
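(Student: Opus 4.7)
The plan is to establish the equality by proving both inequalities separately.

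For $\kappa(G) \leq \min_{x,y} N(x,y)$, I would fix any cut line $p$ through $(x,x+1)$ and $(y,y+1)$ with $N(x,y)$ finite, and show that the set $S$ of trapezoids crossing $p$ is a vertex cut. Finiteness of $N(x,y)$ guarantees there exist trapezoids $T[i]$ entirely left of $p$ and $T[j]$ entirely right; any such pair satisfies $b[i] \leq x < a[j]$ and $d[i] \leq y < c[j]$, so $i \ll j$, and Lemma~3 yields that $i$ and $j$ are non-adjacent. More generally every trapezoid entirely left of $p$ is $\ll$-related to every trapezoid entirely right of $p$, so removing $S$ separates left from right, giving $\kappa(G) \leq |S| = N(x,y)$.

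For the reverse inequality, I would start from a minimum vertex cut $S$ with $|S| = \kappa(G)$, fix a connected component $C$ of $G-S$, and set $D = V \setminus (C \cup S)$. The key structural step is to show that $D$ partitions as $D = D_L \sqcup D_R$ with
$$
D_L = \{j \in D : j \ll i \text{ for all } i \in C\}, \qquad D_R = \{j \in D : i \ll j \text{ for all } i \in C\}.
$$
Suppose for contradiction some $j \in D$ satisfies $j \ll i_1$ and $i_2 \ll j$ for $i_1, i_2 \in C$, and pick a path $i_2 = k_0, k_1, \ldots, k_t = i_1$ inside $C$. Since $j$ is non-adjacent to every $k_s$, Lemma~3 forces $k_s \ll j$ or $j \ll k_s$ for each $s$; the comparison must flip at some consecutive pair, giving $k_s \ll j \ll k_{s+1}$ (or the symmetric case). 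Transitivity of $\ll$ then yields $k_s \ll k_{s+1}$, contradicting the edge $k_s k_{s+1} \in E$.

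With $D = D_L \sqcup D_R$ in hand, I would assume without loss of generality that $D_R \neq \emptyset$ and set $x = \max_{i \in C} b[i]$, $y = \max_{i \in C} d[i]$. Every trapezoid in $C \cup D_L$ then lies entirely left of the resulting cut line, every trapezoid in $D_R$ lies entirely right, so only vertices from $S$ can cross it, yielding $N(x,y) \leq |S| = \kappa(G)$ with $N(x,y)$ finite since both sides are non-empty; the case $D_R = \emptyset$ is handled symmetrically at the left end of $C$. The main obstacle is the structural claim $D = D_L \sqcup D_R$, which is exactly what lets a minimum cut be witnessed by a single cut line and whose proof rests on the transitivity of $\ll$ together with the fact that adjacent vertices are $\ll$-incomparable.
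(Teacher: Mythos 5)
Your proposal is correct and takes essentially the same route as the paper: the easy direction via the non-adjacency lemma, and the hard direction by cutting along $x = \max_{i \in C} b[i]$, $y = \max_{i \in C} d[i]$ for a component $C$ of a minimum vertex cut. The only difference is one of rigor rather than strategy: where the paper chooses $C$ to be the component containing the smallest upper-left corner (so that all remaining components automatically lie to the right, justified only by the ``concave broken line'' remark), you work with an arbitrary component and explicitly prove the decomposition $D = D_L \sqcup D_R$ via the path-flip/transitivity argument, making precise a step the paper leaves implicit.
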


\begin{proof}
Let $S$ be a vertex cut of the graph $G$ with the minimum cardinality $\kappa (G)$. The removal of
$S$ disconnects $G$, and consider the component $C$ that contains a trapezoid with the smallest
upper left corner $a [i]$. Let $x$ be the maximum value among upper right corners in the
component~$C$, $x = \max_{i \in C} b [i]$, and let $y$ be the maximum value among lower right
corners in the component~$C$, $y = \max_{i \in C} d [i]$. Since the right border of trapezoids from
$C$ form concave broken line -- the line $p$ that passes through intervals $(x, x + 1)$ and $(y, y
+ 1)$ is one cut of a trapezoid graph. It follows that $\kappa (G) \geq N (x, y)$. The other
inequality follows immediately, and this completes the proof.
\end{proof}

Note that in the above theorem the points $x = 1$ and $x = 2n$, as well as $y = 1$ and $y = 2n$,
can be excluded from the consideration.

Therefore, one can traverse all values $x$ and $y$ and compute the number of trapezoids that have
non-empty intersection with the line $p$ determined by the intervals $(x, x + 1)$ and $(y, y + 1)$.
The important thing is to ensure that there are trapezoids lying entirely to the left and to the
right of the line $p$. This can be easily done in $O (n^2)$.

We will first precompute the leftmost and the rightmost trapezoids for each interval $(x, x + 1)$
from the upper line. Let $leftmost$ be the index of a trapezoid with $b [leftmost] \leq x$ and
minimal lower right corner $d$. Similarly let $rightmost$ be the index of a trapezoid with $a
[rightmost] \geq x + 1$ and maximal lower left corner $c$. If there are no such trapezoids, set the
values of $leftmost$ and $rightmost$ to $-1$. We need additional arrays $index\_up$ and
$index\_bottom$, such that $index\_up [j]$ contains the index of the trapezoid with the left or
right coordinate equal to $j$ on the upper line, and similarly for the bottom line. This can be
done in linear time $O (n)$ as shown in Algorithm~\ref{alg:leftmost} (implementation for the
$rightmost$ array is similar and, thus, omitted).

\LinesNumbered
\begin{algorithm}
    \label{alg:leftmost}
    \KwIn{The trapezoids $T$ and the array $index\_up$.}
    \KwOut{The array $leftmost$.}

    $leftmost [1] = -1$\;
    \For{$j = 2$ \KwTo $2n$}
    {
        $i = index\_up [j]$\;
        $leftmost [j] = leftmost [j - 1]$\;
        \If{$b [i] = j$}
        {
            \If{($leftmost [j] = -1$) {\bf or} ($d [leftmost [j]] > d [i]$)}
            {
                $leftmost [j] = i$;
            }
        }
    }
    \Return $leftmost$\;

    \caption{ Calculating the leftmost trapezoids. }
\end{algorithm}

We will traverse the coordinates on the upper line from $x = 1$ to $x = 2n$, and skip the values
with $leftmost [x] = -1$ or $rightmost [x] = -1$. For each value $y$ between $d [leftmost [x]]$ and
$c [righmost [x]]$, we need to calculate the number of trapezoids $N (x, y)$ that cut the line $p$.
The trapezoid $T [i]$ cuts the line $p$ if
\begin{itemize}
\item it contains the interval $(x, x + 1)$;
\item it is left trapezoid with the lower right corner greater than $y$, i. e. if $b [i] \leq x$ and
$d [i] > y$;
\item it is right trapezoid with the lower left corner less than or equal to $y$, i. e. if $a [i] > x$ and
$c [i] \leq y$.
\end{itemize}

Furthermore, we will maintain the binary array $cut$ of length $n$ that indicates whether the
trapezoid $T [i]$ contains the interval $(x, x + 1)$. In other words, $cut [i] = true$ if and only
if $a [i] \leq x < x + 1 \leq b [i]$. Since no two trapezoids have a common corner, we can update
this array in the constant time by traversing from $x$ to $x + 1$. Therefore, for each trapezoid $T
[i]$ with $cut [i] = true$ we easily check whether this trapezoid is on the left of $x$ or on the
right of~$x$. We can also keep the number of left and right trapezoids in the variables $left$ and
$right$ (see Algorithm \ref{alg:new}).

In order to calculate the minimum value $N (x, y)$ for the fixed $x$ coordinate, we will traverse
$y$ coordinates, and count the number of trapezoids with $cut [i] = false$ that intersect the
line~$p$. The idea is to calculate the cumulative sum by adding $+1$ for each coordinate $c [i]$ of
right trapezoids and by adding $-1$ for each coordinate $d [i]$ of left trapezoids. The starting
value of the cumulative sum is $left$. The number $N (x, y)$ is equal to the number of trapezoids
that contain the interval $(x, x + 1)$ plus the cumulative sum. The pseudo-code of this approach is
given in Algorithm \ref{alg:quadratic}.

\LinesNumbered
\begin{algorithm}
    \label{alg:quadratic}
    \KwIn{The trapezoids $T$ and the arrays $leftmost$, $rightmost$, $index\_bottom$, $cut$ and parameters $x$, $left$, $right$.}
    \KwOut{The minimum value $N (x, y)$ for $1 \leq y \leq 2n$.}

    $sum = left$\;
    $min\_sum = -1$\;
    \For{$y = 1$ \KwTo $c [rightmost [x]]$}
    {
        \If{($y \geq d [leftmost [x]]$) {\bf and} ($y \leq c [righmost [x]]$)}
        {
            \If{($min\_sum = -1$) {\bf or} ($min\_sum > sum$)}
            {
                $min\_sum = sum$\;
            }
        }
        $i = index\_bottom [y]$\;
        \If{($b [i] \leq x$) {\bf and} ($y = d [i]$) {\bf and} ($cut [i] = false$)}
        {
            $sum = sum - 1$\;
        }
        \If{($a [i] > x$) {\bf and} ($y = c [i]$) {\bf and} ($cut [i] = false$)}
        {
            $sum = sum + 1$\;
        }
    }
    \If{$min\_sum > -1$}
    {
        \Return $(n - left - right) + min\_sum$\;
    }
    \Else
    {
        \Return $-1$\;
    }

    \caption{ Calculating the minimum value $N (x, y)$ for the given coordinate $x$. }
\end{algorithm}

\begin{example}
The vertex connectivity of the graph $G$ in Figure \ref{fig:example} is two. For $x = 11$, we have
the following parameters $left = 5$, $right = 2$, $cut [6] = true$ and $cut [i] = false$ for $1
\leq i \leq 8$ and $i \neq 6$. The execution of Algorithm \ref{alg:quadratic} is presented in Table
1.

\begin{table}[!htbp]
\label{tab:example} \centering
\begin{tabular} {l|llllllllllllllll}
\toprule
$i$ & 1 &  2 &  3 &  4  & 5 &  6 &  7 &  8 &  9 &  10 & 11 & 12 & 13 & 14 & 15  & 16 \\
\midrule
$leftmost$ &  -1 & -1 &  1 & 1 &  1 &  1 & 1 & 1 & 1 & 1 & 1 &  1 & 1 &  1 & 1 & 1\\
$rightmost$ &  7 &  7 &  7 &  7 &  7 &  7 &  7 &  7 &  7 & 7 &  7 &  8 &  8 &  -1 &  -1 &  -1 \\
$A$ & 0 & 0 & -1 & -1 & 0 & 0 & 0 & -1 & 0 & -1 & 0 & 1 & -1 &  1 & 0 & 0\\
$sum$ & 5 & 5 & 4 & 3 & 3 & 3 & 3 & 2 & 2 & 1 & 1 & 2 & 1 & 2 & 2 & 2 \\
\bottomrule
\end{tabular}
\caption{Example of the algorithm execution.}
\end{table}

From the above table, we conclude that the minimum value of cumulative sums from $y = 3$ to $y =
13$ equals 1 and it is achieved for $y = 10, 11, 13$. This shows that the vertex connectivity of
$G$ is less than or equal to $1 + (8 - 5 - 2) = 2$.
\end{example}

For the efficient implementation, we will use the modified binary indexed tree data structure. The
array $A$ will correspond to the lower coordinates from $1$ to $2n$. For each trapezoid $T [i]$
with $cut [i] = false$, assign $A [c [i]] = 0$ and $A [d [i]] = -1$ if $T [i]$ is a left trapezoid,
and assign $A [c [i]] = 1$ and $A [d [i]] = 0$ if $T [i]$ is a right trapezoid. In order to find
the minimum value of $N (x, y)$ for $1 \leq y \leq 2n$, we can just return $Calculate (rightmost
[x])$. The only problem is to ensure that there is at least one left trapezoid for each unit
interval $(x, x + 1)$. We can solve this by setting $-n^2 - 1$ for the value $A [d [leftmost [x]]]$
and taking this into account when calculating the minimum values. The number $n^2 + 1$ is big
enough and all partial sums before the index $leftmost [x]$ will be greater than $n$ (we update the
vertex connectivity only if $N (x, y)$ is less than or equal to $n$). At the end we need to handle
the special case -- when $G$ is a complete graph. The pseudo-code of this algorithm is presented
below.

\LinesNumbered
\begin{algorithm}
    \label{alg:new}
    \KwIn{The trapezoids $T$, the arrays $leftmost$, $rightmost$ and $index\_up$, and MBIT data structure.}
    \KwOut{The vertex connectivity number.}

    $k = -1$\;
    $left = 0$\;
    $right = n$\;
    \For{$i = 1$ \KwTo $n$}
    {
        $cut [i] = false$\;
        $Update (c [i], 1)$\;
        $Update (d [i], 0)$\;
    }
    \For{$x = 1$ \KwTo $2n - 1$}
    {
        $i = index\_up [x]$\;
        \If{$a [i] = x$}
        {
            $cut [i] = true$\;
            $right = right - 1$\;
            $Update (c [i], 0)$\;
            $Update (d [i], 0)$\;
        }
        \Else
        {
            $cut [i] = false$\;
            $left = left + 1$\;
            $Update (c [i], 0)$\;
            $Update (d [i], -1)$\;
        }

        \If{($x > 1$) {\bf and} ($leftmost [x - 1] \neq -1$)}
        {
            $Update (d [leftmost [x - 1]], -1)$\;
        }
        \If{($leftmost [x] \neq -1$)}
        {
            $Update (d [leftmost [x]], -n \cdot n-1)$\;
        }
        \If{($rightmost [x] \neq -1$) {\bf and} ($leftmost [x] \neq -1$)}
        {
            $Nxy = Calculate (rightmost [x]) + n \cdot n + (n - right)$\;
            \If{$Nxy \leq n$}
            {
                \If{($k = -1$) {\bf or} ($k > Nxy$)}
                {
                    $k = Nxy$\;
                }
            }
        }
    }

    \If{$k > -1$}
    {
        \Return $k$\;
    }
    \Else(\tcp*[h]{   complete graph})
    {
        \Return $n - 1$\;
    }
    \caption{ Calculating the vertex connectivity of a trapezoid graph. }
\end{algorithm}

Since every trapezoid will be added and removed exactly once from the modified binary indexed tree
data structure, the total time complexity is $O (n \log n)$. This modified data structure with
variable left ends is a novel approach to the best of our knowledge and makes this problem very
interesting.

We conclude this section by summing the results in the following theorem.

\begin{theorem}
The proposed algorithm calculates the vertex connectivity of a trapezoid graph with $n$ vertices in
time $O (n \log n)$ and space $O (n)$.
\end{theorem}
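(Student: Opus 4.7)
The plan is to invoke Lemma 2 and show that the algorithm faithfully computes the minimum over all admissible cut lines in $O(n\log n)$ time. Since $\kappa(G)=\min_{x,y}N(x,y)$ with the convention that inadmissible lines contribute $\infty$, the outer sweep over $x\in\{1,\ldots,2n-1\}$ combined with the prefix-minimum query over $y$ covers every case; what remains is to check that each admissible pair is scored correctly and that inadmissible ones are filtered out.

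First I would establish an MBIT invariant: after the $x$-th iteration of the outer loop, $A$ stores $+1$ at $c[i]$ and $0$ at $d[i]$ for every right trapezoid (those with $a[i]>x$), $0$ at $c[i]$ and $-1$ at $d[i]$ for every non-cut left trapezoid, and zeros at the endpoints of trapezoids currently cut by $(x,x+1)$. Under this invariant a short calculation shows that, up to an additive constant depending only on $x$ and the global counters $left$ and $right$, the prefix sum $A[1]+\cdots+A[y]$ equals the count of non-cut trapezoids that meet the line $p$ through $(x,x+1)$ and $(y,y+1)$; adding $n-right$ (the number of cut trapezoids) recovers $N(x,y)$ exactly. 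Taking the prefix minimum via $Calculate(rightmost[x])$ therefore yields $\min_y N(x,y)$ after the explicit compensation in line 23. Maintaining the invariant reduces to the two cases in lines 10--17, which I would verify by direct case analysis on whether $x=a[i]$ or $x=b[i]$.

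Second I would justify the large-sentinel trick. Lemma 2 only admits $y\geq d[leftmost[x]]$, since otherwise no trapezoid lies entirely to the left of $p$. Overwriting $A[d[leftmost[x]]]$ with $-n^2-1$ (after restoring the previous sentinel to $-1$ in lines 18--19 so the accounting from the previous iteration is undone) guarantees that any prefix sum for $y<d[leftmost[x]]$ stays well above $n$ after the $+n^2$ correction, while prefix sums for admissible $y$ are shifted uniformly and then corrected back by the same additive term; the guard $Nxy\leq n$ in line 25 discards any residual case. The complete-graph branch at $k=-1$ matches the convention $\kappa(K_n)=n-1$, since in that situation every cut line is inadmissible by Lemma 2 and no candidate ever passes the guard.

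Finally, the complexity count is routine. Each trapezoid triggers $O(1)$ MBIT updates during initialization, exactly one update when $x$ crosses its upper-left or upper-right corner, and $O(1)$ sentinel swaps whenever it becomes or stops being $leftmost[x]$; because $leftmost$ is monotone non-decreasing in index and changes at most $n$ times as $x$ sweeps, the total number of $Update$ calls is $O(n)$, plus one $Calculate$ per outer iteration. By Theorem 1 every MBIT operation runs in $O(\log n)$, for total running time $O(n\log n)$. The $O(n)$ space bound is immediate from the auxiliary arrays $leftmost$, $rightmost$, $index\_up$, $index\_bottom$, $cut$, and the MBIT arrays of length at most $4n$. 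The step I expect to be most delicate in a careful write-up is the sentinel bookkeeping around $leftmost[x]$, namely verifying that admissible prefix sums are shifted by precisely the value compensated in line 23 while inadmissible ones remain strictly above the $n$-threshold; everything else is either the MBIT invariant or simple event counting.
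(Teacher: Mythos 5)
Your proposal is correct and takes essentially the same approach as the paper, which states this theorem only as a summary of Section 3 (Lemma 2 for correctness, Theorem 1 for the MBIT operation cost, and the one-line observation that each trapezoid is added to and removed from the MBIT exactly once); your explicit invariant, sentinel analysis for $-n^2-1$, and the $Nxy \leq n$ guard argument simply spell out what the paper leaves implicit. One cosmetic slip: $n - right$ is not ``the number of cut trapezoids'' (that number is $n - left - right$; the extra $left$ is the starting value of the cumulative sum, as in Algorithm 4), but the identity $N(x,y) = \bigl(A[1] + \cdots + A[y]\bigr) + (n - right)$ that you actually use matches the compensation in the algorithm exactly.
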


\section{Bipartiteness criteria and tree representation}

In this section we establish local test for bipartiteness of trapezoid graphs.

\begin{theorem}
\label{thm:bipartite} The trapezoid graph $G$ is bipartite if and only if it does not contain a
triangle.
\end{theorem}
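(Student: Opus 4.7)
The plan is to dispose of the easy direction first and then reduce the nontrivial direction to a property already mentioned in the introduction, namely that trapezoid graphs are perfect.

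First I would handle the forward implication: if $G$ is bipartite then $G$ contains no odd cycle, and in particular no $K_3$. This requires no structural properties of the trapezoid representation whatsoever.

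For the converse, suppose $G$ is triangle-free. If $G$ has no edge at all then bipartiteness is trivial, so assume $G$ has at least one edge, whence the clique number satisfies $\omega(G)=2$. The key step is to invoke perfectness: every induced subgraph $H$ of a trapezoid graph satisfies $\chi(H)=\omega(H)$. Applied to $G$ itself, this yields $\chi(G)=\omega(G)=2$, so $G$ admits a proper $2$-colouring and is therefore bipartite.

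I do not expect any genuine obstacle here, since the nontrivial content is entirely imported from the perfectness of trapezoid graphs (which in turn follows from their being a subclass of cocomparability graphs). If a self-contained combinatorial argument were preferred, one would instead take a shortest odd cycle $C=v_1v_2\cdots v_{2k+1}$ in $G$ and argue directly from the trapezoid diagram that $k=1$: an induced odd hole of length $\geq 5$ in a trapezoid graph can be ruled out by examining the partial order $\ll$ on consecutive trapezoids around $C$ and deriving a contradiction from the cyclic arrangement of their corner coordinates. This alternative route would be the main technical step, but given that the paper already cites perfectness, the one-line appeal to $\chi=\omega$ is the natural and shortest plan.
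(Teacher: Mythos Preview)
Your argument is correct: triangle-free plus perfect immediately gives $\chi(G)\le 2$, hence bipartiteness. This is a genuinely different route from the paper. The paper does not invoke perfectness at all; instead it takes a hypothetical shortest odd cycle $C=T[1]T[2]\cdots T[k]$ with $k>3$, observes that $C$ must be chordless, and then argues directly from the geometry of the trapezoid diagram (case-splitting on how $T[1]$ and $T[2]$ overlap) that the remaining trapezoids $T[3],\ldots,T[k]$ cannot close up into a cycle without producing a chord. Your proof is much shorter but imports a nontrivial external fact; the paper's proof is self-contained and, as a by-product, yields the stronger structural observation (stated right after Theorem~\ref{thm:bipartite}) that every cycle of length greater than four in a trapezoid graph has a chord --- something your perfectness argument does not give directly.
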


\begin{proof}
The first part directly follows from the well-known result: the graph $G$ is bipartite if and only
if it does not contain odd cycles.

Let $G$ be a triangle-free trapezoid graph and let $C = T [1] T [2] \ldots T [k]$ be the smallest
odd cycle contained in $G$ with $k > 3$. It can be easily seen that there are no chords in $C$, i.
e. there are no edges of the form $T [i] T [j]$ with $i > j + 1$ (otherwise we could find smaller
odd cycle). Consider the intersection of the trapezoids $T [1]$ and $T [2]$. If their intersection
is a trapezoid with height equal to the distance of two parallel lines (see the first part of
Figure \ref{fig:bipartite1}), then all trapezoids $T [3], T [4], \ldots, T [k]$ must be on the
right side of $T [1]$ -- which is impossible, since $T [k]$ must have common points with $T [1]$.
Otherwise, the trapezoids $T [1]$ and $T [2]$ have intersection as shown in the second part of
Figure \ref{fig:bipartite1}. Without loss of generality we can assume that the trapezoids $T [3]$
and $T [k]$ are independent and positioned as shown in the figure. In order to connect the
trapezoids $T [3]$ and $T [k]$ by a path of trapezoids $T [4] T [5] \ldots T [k - 1]$ -- some
trapezoids of this path must intersect either $T [1]$ or $T [2]$, which is impossible.

Therefore, the graph $G$ does not contain cycles of odd length and it follows that $G$ is
bipartite.
\end{proof}

\begin{figure}[ht]
  \label{fig:bipartite1}
  \center
  \includegraphics [width = 14cm]{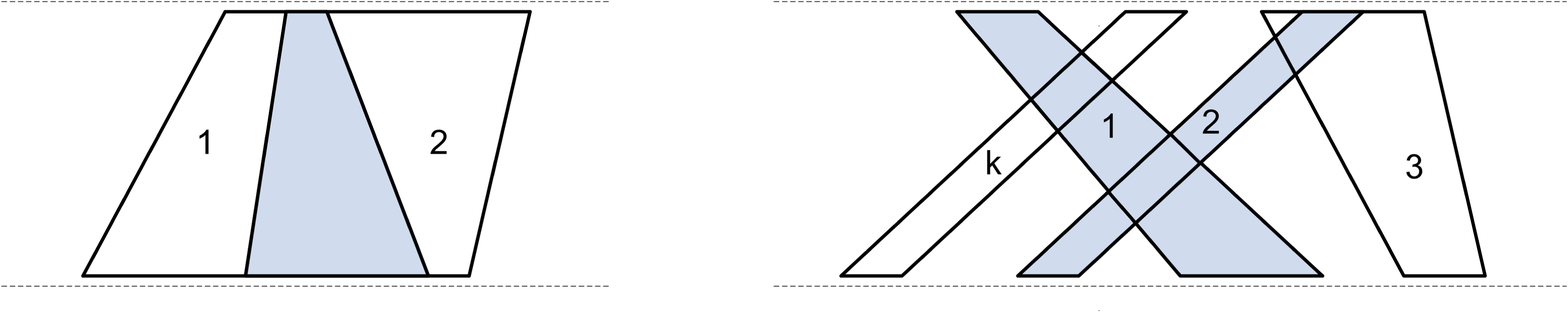}
  \caption { \textit{ Two cases for Theorem \ref{thm:bipartite}. } }
\end{figure}

Note that from the above proof it follows that each cycle of length greater than four contains a
chord (an edge joining two nodes that are not adjacent in the cycle).

\medskip

A caterpillar graph is a tree such that if all pendent vertices and their incident edges are
removed, the remainder of the graph forms a path. Let $C_{n,d} (a_1, a_2, \ldots, a_{d-1})$ be a
caterpillar with $n$ vertices obtained from a path $P_{d+1} = v_0 v_1 \ldots v_{d-1}v_d$ by
attaching $p_i \geq 0$ pendent vertices to the vertex $v_i$, $1\leq i \leq d-1$, where
$n=d+1+\sum_{i=1}^{d-1}p_i$. It can be easily seen that each caterpillar $C_{n,d} (a_1, a_2,
\ldots, a_{d-1})$ has trapezoid representation as triangle-free interval graph \cite{Ju93}.

Assume now that tree $G$ is not a caterpillar and has trapezoid representation. Then it contains a
vertex $w$ with neighbors $v_1, v_2, v_3$, such that each vertex $v_i$ has another neighbor $u_i$
different than $w$, $i = 1, 2, 3$. The trapezoids $T [v_1]$, $T [v_2]$ and $T [v_3]$ corresponding
to the vertices $v_1$, $v_2$ and $v_3$ are independent. Without loss of generality we can assume
the order $T [v_1] \ll T [v_2] \ll T [v_3]$. Since all trapezoids $T [v_1]$, $T [v_2]$ and $T
[v_3]$ intersect the trapezoid $T [w]$, it can be easily seen that all neighbors of $T [v_2]$
(trapezoid $T [u_2]$ in particular) also must intersect $T [w]$. This is a contradiction, and $G$
does not have trapezoid representation. Therefore, we proved the following

\begin{theorem}
\label{thm:trees} A trapezoid graph $G$ represents a tree if and only if it is a caterpillar.
\end{theorem}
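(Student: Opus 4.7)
The plan is to prove the two implications separately. For the \emph{if} direction, I would leverage the observation made just above the statement: every caterpillar $C_{n,d}(a_1,\ldots,a_{d-1})$ can be realized as a triangle-free interval graph. Since interval graphs are the degenerate trapezoid graphs obtained by setting $a[i]=c[i]$ and $b[i]=d[i]$ for every $i$ (as noted in the introduction), this immediately supplies a trapezoid representation of any caterpillar.

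For the harder \emph{only if} direction, I would argue by contrapositive. Suppose $G$ is a tree that is not a caterpillar; the standard characterization of caterpillars then produces a vertex $w$ with three neighbors $v_1,v_2,v_3$ such that each $v_i$ has a further neighbor $u_i\neq w$. The goal is to show that no trapezoid representation of $G$ can exist. Since $G$ is a tree, no two of $v_1,v_2,v_3$ can be adjacent (otherwise $wv_iv_j$ would be a triangle), so by Lemma~2 their trapezoids are pairwise comparable under $\ll$, and after relabeling I may assume $T[v_1]\ll T[v_2]\ll T[v_3]$.

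The key geometric claim I would then establish is: every trapezoid that meets $T[v_2]$ must also meet $T[w]$. Granted this, $T[u_2]$ intersects $T[w]$, so $u_2w$ is an edge, forming the triangle $wv_2u_2$ and contradicting the hypothesis that $G$ is a tree. To justify the claim, I would observe that $T[w]$ must meet both $T[v_1]$ and $T[v_3]$ while $T[v_1]\ll T[v_2]\ll T[v_3]$. Tracking the inequalities on the upper and lower lines separately via Lemma~2 gives $b[v_1]<a[v_2]<b[v_2]<a[v_3]$ and $d[v_1]<c[v_2]<d[v_2]<c[v_3]$; combined with $T[w]\cap T[v_1]\neq\emptyset$ and $T[w]\cap T[v_3]\neq\emptyset$, this forces the upper and the lower edges of $T[w]$ to straddle the entire horizontal extent of $T[v_2]$ on each line, so $T[v_2]$ is effectively trapped in the vertical strip covered by $T[w]$.

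The main obstacle will be this last geometric step: one has to carefully rule out degenerate configurations in which $T[w]$ is narrow at one of the two horizontal lines, or in which the corners of $T[w]$ interact unexpectedly with those of $T[v_1],T[v_2],T[v_3]$. I would handle it by a short case analysis on whether $a[w]$ lies to the left or the right of $b[v_1]$ (and symmetrically for the other three corners of $T[w]$), each case either yielding the desired containment or producing an immediate contradiction with one of the required intersections $T[w]\cap T[v_i]\neq\emptyset$. Once the claim is settled, the contradiction with the tree hypothesis is immediate and the theorem follows.
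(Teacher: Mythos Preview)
Your overall strategy coincides with the paper's: realize caterpillars as interval graphs for the \emph{if} direction, and for the \emph{only if} direction pick $w$ with neighbours $v_1,v_2,v_3$ each having a private neighbour $u_i$, order $T[v_1]\ll T[v_2]\ll T[v_3]$, and force $T[u_2]$ to meet $T[w]$. That part is fine.

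The gap is in your ``key geometric claim''. The statement \emph{every trapezoid that meets $T[v_2]$ must also meet $T[w]$} is false, and so is the supporting assertion that the upper and lower edges of $T[w]$ must each straddle the full horizontal extent of $T[v_2]$. From $T[w]\cap T[v_1]\neq\emptyset$ you only get $a[w]<b[v_1]$ \emph{or} $c[w]<d[v_1]$, not both; likewise on the right side. So $T[w]$ may cover $[a[v_2],b[v_2]]$ on the upper line while sitting entirely to one side of $[c[v_2],d[v_2]]$ on the lower line. In that situation a trapezoid $S$ can cross $T[v_2]$ on the lower line, avoid $T[w]$ there, and lie completely to one side of $T[w]$ on the upper line (for instance take $a[w]<b[v_1]$, $b[w]>a[v_3]$, but $c[w]>d[v_2]$, and let $S$ have its upper interval to the left of $a[w]$ and its lower interval inside $[c[v_2],d[v_2]]$). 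No amount of case analysis on the corners of $T[w]$ will rescue the claim, because the claim itself is too strong; your ``degenerate configuration'' is not a removable nuisance but a genuine counterexample.

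The fix is to use one more piece of information you already have: since $G$ is a tree, $u_2$ is adjacent to neither $v_1$ nor $v_3$. Hence $T[v_1]\ll T[u_2]\ll T[v_3]$ (the other orientations would give $T[u_2]\ll T[v_2]$ or $T[v_2]\ll T[u_2]$ by transitivity). Now if $T[u_2]$ missed $T[w]$ we would have $T[u_2]\ll T[w]$ or $T[w]\ll T[u_2]$; composing with $T[v_1]\ll T[u_2]$ in the first case and with $T[u_2]\ll T[v_3]$ in the second yields $T[v_1]\ll T[w]$ or $T[w]\ll T[v_3]$, contradicting $w\sim v_1$ and $w\sim v_3$. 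This two-line transitivity argument replaces your case analysis entirely and is exactly what the paper's ``it can be easily seen'' is pointing at.
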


\section{Concluding remarks}

In this paper we presented an efficient algorithm for calculating the vertex connectivity number
$\kappa (G)$ of a trapezoid graph. We leave as an open problem to design efficient algorithm for
finding the edge connectivity number $\lambda (G)$ in trapezoid graphs.

The $k$--trapezoid graphs are an extension of trapezoid graphs to higher dimension orders. The
$k$--dimensional box representation $(V, l, u)$ of a graph $G = (V, E)$ consists of mappings $l: V
\rightarrow R^k$ and $u: V \rightarrow R^k$ such that $l [i]$ is the lower and $u [i]$ the upper
corner of a box $box [i]$ where two vertices of the graph are joined by an edge iff their
corresponding boxes are incomparable \cite{FeMuWe97}. If a graph has such a representation, it is a
$k$--trapezoid graph. If we additionally have a weight $w: V \rightarrow R$ on the vertices of $G$
then the $k$--trapezoid graph is weighted. For the case $k = 2$, we have simple trapezoid graphs.

Another generalization are circular trapezoid graphs -- the intersection graphs of circular
trapezoids between two parallel (concentric) circles \cite{Li06}. It seems that the presented
approach can be modified and adapted for calculating the vertex connectivity number of
$k$--trapezoid graphs and circular trapezoid graphs.

\vspace{0.5cm} {\bf Acknowledgement. }  This work was supported by Research Grants 174010 and
174033 of Serbian Ministry of Education and Science.


\begin{thebibliography}{99}

\bibitem{AbOh05}
    M. I. Abouelhoda, E. Ohlebusch,
    \emph{Chaining algorithms for multiple genome comparison},
    J. Discrete Algorithms 3 (2005) 321--341.

\bibitem{BePaPa02}
    D. Bera, M. Pal, T. K. Pal,
    \emph{An efficient algorithm to generate all maximal cliques on trapezoid graphs},
    Int. J. Comput. Math. 79 (2002) 1057--1065.

\bibitem{BePaPa03}
    D. Bera, M. Pal, T. K. Pal,
    \emph{An Efficient Algorithm for Finding All Hinge Vertices on Trapezoid Graphs},
    Theory Comput. Systems 36 (2003) 17--27.

\bibitem{ChCo96}
    F. Cheah, D. G. Corneil,
    \emph{On the structure of trapezoid graphs},
    Discrete Appl. Math. 66 (1996) 109--133.

\bibitem{CoLeRiSt01}
    T. H. Cormen, C. E. Leiserson, R. L. Rivest, C. Stein,
    \emph{Introduction to Algorithms},
    Second Edition, MIT Press, 2001.

\bibitem{CoKa87}
    D. G. Corneil, P. A. Kamula,
    \emph{Extensions of permutation and interval graphs},
    In: Proceedings of 18th Southeastern Conference on Combinatorics, Graph Theory and Computing, 1987, 267--275.

\bibitem{CrGa10}
    C. Crespelle, P. Gambette,
    \emph{Unrestricted and complete Breadth First Search of trapezoid graphs in $O(n)$ time},
    Inform. Process. Lett. 110 (2010) 497--502.

\bibitem{DaGoPi88}
    I. Dagan, M. C. Golumbic, R. Y. Pinter,
    \emph{Trapezoid graphs and their coloring},
    Discrete Appl. Math. 21 (1988) 35--46.

\bibitem{EvTa75}
    S. Even, R. E. Tarjan,
    \emph{Network flow and testing graph connectivity},
    Inform. Process. Lett. 4 (1975) 507--518.

\bibitem{FeMuWe97}
    S. Felsner, R. M\" uller, L. Wernisch,
    \emph{Trapezoid graphs and generalizations, geometry and algorithms},
    Discrete Appl. Math. 74 (1997) 13--32.

\bibitem{GhPa05}
    P. K. Ghosh, M. Pal,
    \emph{An efficient algorithm to find the maximum matching on trapezoid graphs},
    J. Korean Soc. Ind. Appl. Math. IT Series 9 (2) (2005) 13--20.

\bibitem{GhPa07}
    P. K. Ghosh, M. Pal,
    \emph{An efficient algorithm to solve connectivity problem on trapezoid graphs},
    J. Appl. Math. \& Computing 24 (2007) 141--154.

\bibitem{HoPaPa04}
    M. Hota, M. Pal, T. K. Pal,
    \emph{Optimal sequential and parallel algorithms to compute all cut vertices on trapezoid graphs},
    Comput. Optim. Appl. 27 (1) (2004) 95--113.

\bibitem{IlIl11}
    A. Ili\' c, A. Ili\' c,
    \emph{On vertex covers and matching number of trapezoid graphs},
    (2011), submitted.

\bibitem{Ju93}
    E. J\" urgen,
    \emph{Extremal interval graphs},
    J. Graph Theory 17 (1993) 117--127.

\bibitem{Ka72}
    R. M. Karp,
    \emph{Reducibility among combinatorial problems},
    In: R. E. Miller, J. W. Thatcher (Eds.), Complexity of Computer Computation, Plenum Press, New York,
    1972, 85--103.

\bibitem{Li94}
    Y. D. Liang,
    \emph{Domination in trapezoid graphs},
    Inform. Process. Lett. 52 (1994) 309--315.

\bibitem{Li95}
    Y. D. Liang,
    \emph{Steiner set and connected domination in trapezoid graphs},
    Inform. Process. Lett. 56 (1995) 101--108.


\bibitem{LiCh09}
    M. S. Lin, Y. J. Chen,
    \emph{Counting the number of vertex covers in trapezoid graph},
    Inform. Process. Lett. 109 (2009) 1187--1192.

\bibitem{Li06}
    Y. L. Lin,
    \emph{Circular and circle trapezoid graphs},
    J. Sci. Eng. Tech. 2 (2006) 11--17.

\bibitem{KaHo94}
    T. W. Kao, S. J. Horng,
    \emph{Computing $k$-Vertex Connectivity on an Interval Graph},
    International Conference on Parallel Processing ICPP'94, Vol. 3 (1994) 218--221.

\bibitem{MaSp94}
    T. H. Ma, J. P. Spinrad,
    \emph{On the 2-chain subgraph cover and related problems},
    J. Algorithms 17 (1994) 251--268.

\bibitem{MeCo09}
    G. B. Mertzios, D. G. Corneil,
    \emph{Vertex splitting and the recognition of trapezoid graphs},
    Technical Report AIB-2009-16, RWTH Aachen University, 2009.

\bibitem{MoPaPa02}
     S. Mondal, M. Pal, T. K. Pal,
     \emph{An optimal algorithm for solving all-pairs shortest paths on trapezoid graphs},
     Int. J. Comput. Eng. Sci. (IJCES) 3 (2002) 103--116.

\bibitem{TsLiHs07}
    Y. T. Tsai, Y. L. Lin, F. R. Hsu,
    \emph{Efficient algorithms for the minimum connected domination on trapezoid graphs},
    Inform. Sci. 177 (2007) 2405--2417.


\end{thebibliography}
\end{document}